\newtheorem{definition}{Definition}
\newtheorem{proposition}[definition]{Proposition}
\newtheorem{lemma}[definition]{Lemma}
\newtheorem{corollary}[definition]{Corollary}
\newtheorem{example}[definition]{Example}
\mathchardef\hyphen=45 
\newcommand\restr[2]{{
		\left.\kern-\nulldelimiterspace 
		#1 
		\littletaller 
		\right|_{#2} 
}}
\newcommand{\littletaller}{\mathchoice{\vphantom{\big|}}{}{}{}}
\newrobustcmd\inv[1]{#1^{-1}}
\newcommand{\tand}{\text{ and }}
\newcommand{\NP}{\ensuremath{\mathsf{NP}}}
\newcommand{\NPc}{\ensuremath{\mathsf{NP}\hyphen\mathrm{complete}}}
\newcommand{\PSPACE}{\ensuremath{\mathsf{PSPACE}}}
\DeclareMathAlphabet{\mathpzc}{OT1}{pzc}{m}{it}
\newrobustcmd{\NN}{\mathbb{N}}
\newrobustcmd{\ZZ}{\mathbb{Z}}
\newrobustcmd{\QQ}{\mathbb{Q}}
\newrobustcmd{\RR}{\mathbb{R}}
\newrobustcmd{\CC}{\mathbb{C}}
\newrobustcmd{\WW}{\mathbb{W}}
\newrobustcmd{\I}{\mathcal{I}}
\newrobustcmd{\F}{\mathcal{F}}
\newrobustcmd{\D}{\mathcal{D}}
\newrobustcmd{\N}{\mathcal{N}}
\newrobustcmd{\G}{\mathcal{G}}
\newrobustcmd{\M}{\mathcal{M}}
\newrobustcmd{\Q}{\mathcal{Q}}
\newrobustcmd{\C}{\mathcal{C}}
\newrobustcmd{\A}{\mathcal{A}}
\newrobustcmd{\B}{\mathcal{B}}
\newrobustcmd{\Z}{\mathcal{Z}}
\newrobustcmd{\R}{\mathcal{R}}
\newrobustcmd{\T}{\mathcal{T}}
\newrobustcmd{\U}{\mathcal{U}}
\newrobustcmd{\W}{\mathcal{W}}
\renewcommand{\P}{\mathcal{P}}
\newrobustcmd{\kk}{\kappa}
\newrobustcmd{\uu}{\upsilon}
\newrobustcmd{\dd}{\delta}
\newrobustcmd{\rr}{\rho}
\newrobustcmd{\bb}{\beta}
\newrobustcmd{\oo}{\omega}
\newrobustcmd{\pp}{\varphi}
\newrobustcmd{\ee}{\varepsilon}
\renewcommand{\epsilon}{\varepsilon}
\renewcommand{\SS}{\Sigma}
\newrobustcmd{\GG}{\Gamma}
\newrobustcmd{\DD}{\Delta}
\knowledgerenewmathcommand\nu{\cmdkl{\LaTeXnu}}
\knowledgenewmathcommand\nuAcd{\cmdkl{\LaTeXnu}}
\knowledgerenewmathcommand\eta{\cmdkl{\LaTeXeta}}
\newrobustcmd{\SOneS}{\ensuremath{\mathrm{S1S}}}
\newrobustcmd{\LTL}{\kl[\LTL]{\ensuremath{\mathrm{LTL}}}}
\knowledge{\LTL}[\LTLf]{notion}
\newrobustcmd{\LTLf}{\kl[\LTLf]{\ensuremath{\mathrm{LTL}_f}}}
\newrobustcmd{\infOften}{\kl[\infOften]{\mathtt{Inf}}}
\knowledge{\infOften}{notion}
\newrobustcmd{\finOften}{\kl[\finOften]{\mathtt{Fin}}}
\knowledge{\finOften}{notion}
\newrobustcmd{\noOcc}{\kl[\noOcc]{\mathtt{No}}}
\newcommand\iffdef{\mathrel{\overset{\makebox[0pt]{\mbox{\normalfont\footnotesize def}}}{\iff}}}
\newcommand{\re}[1]{\xrightarrow{#1}}
\newcounter{sarrow}
\newrobustcmd{\parity}{\kl[\parity]{\mathsf{parity}}}
\newrobustcmd{\Buchi}{\kl[\Buchi]{\mathsf{Buchi}}}
\newrobustcmd{\coBuchi}{\kl[\coBuchi]{\mathsf{coBuchi}}}
\newrobustcmd{\Muller}{\kl[\Muller]{\mathsf{Muller}}}
\newrobustcmd{\genBuchi}{\kl[\genBuchi]{\mathsf{genBuchi}}}
\newcommand{\Eve}{\mathrm{Eve}}
\newcommand{\Adam}{\mathrm{Adam}}
\newrobustcmd{\VEve}{\kl[\VEve]{V_{\mathrm{Eve}}}}
\knowledge\VEve[V_P|\widetilde {V}_\Eve|\tilde {V}_\Eve ]{notion}
\newrobustcmd{\VAdam}{\kl[\VAdam]{V_{\mathrm{Adam}}}}
\knowledge\VAdam[\widetilde {V}_\Adam ]{notion}
\newrobustcmd{\strat}{\mathsf{strat}}
\newrobustcmd{\Verts}[1]{\kl[\Verts]{V(#1)}}
\knowledge{\Verts}{notion}
\newrobustcmd{\Edges}[1]{\kl[\Edges]{E(#1)}}
\knowledge{\Edges}{notion}
\newcommand{\init}{{\mathsf{init}}}
\newcommand{\acc}{\mathit{Acc}}
\newcommand{\win}{\mathit{Win}}
\newcommand{\tr}{\mathrm{tr}}
\newcommand{\st}{\mathrm{st}}
\newrobustcmd{\accLast}{\kl[\accLast]{\acc_\mathrm{Last}}}
\knowledge{\accLast}{notion}
\newrobustcmd{\accReach}{\kl[\accReach]{\acc_{\mathrm{Reach}}}}
\knowledge{\accReach}{notion}
\newrobustcmd{\sizeDet}{\kl[\sizeDet]{\mathsf{sizeDet}}}
\knowledge\sizeDet{notion}
\newrobustcmd{\eqeps}{\mathrel{\kl[\eqeps]{\dot{\sim}_L}}}
\knowledge\eqeps{notion}
\NewDocumentCommand{\pbVertexCover}{}{\normalfont{\textsc{\small{Vertex Cover}}}}
\title{Transition-based vs stated-based acceptance for automata over infinite words}
\author{Antonio Casares\thanks{University of Warsaw, Poland. Email: \url{antoniocasaressantos@gmail.com}\\ This work was supported by the Polish National Science Centre (NCN) grant ``Polynomial finite state computation'' (2022/46/A/ST6/00072).}}
\begin{document}

\maketitle


\begin{abstract}
 Automata over infinite objects are a well-established model with applications in logic and formal verification.
 Traditionally, acceptance in such automata is defined based on the set of states visited infinitely often during a run. 
 However, there is a growing trend towards defining acceptance based on transitions rather than states.

 In this survey, we analyse the reasons for this shift and advocate using transition-based acceptance in the context of automata over infinite words. 
 We present a collection of problems where the choice of formalism has a major impact and discuss the causes of these differences. 
\end{abstract}

%
%

\setcounter{tocdepth}{1}
\tableofcontents


\newpage

\section{Introduction}\label{sec:intro}
Automata theory is a central and long-established topic in computer science. The definition of finite automata has barely suffered any modification since the introduction of non-deterministic automata by Rabin and Scott~\cite{RS59FiniteAutomata}.
However, the generalisation of automata to infinite words presents less stable definitions, as different modes of acceptance are best suited to different situations.
Recently, there has been a shift in the community towards using transitions instead of states to encode the acceptance condition of $\omega$-automata.
In this survey, we analyse the reasons for this shift and advocate using "transition-based" acceptance in the context of automata over infinite words.

\subparagraph{Automata over infinite words.} 
\AP An ""automaton"" over an input alphabet $\Sigma$ is given by
\begin{itemize}
	\item a finite set of states $Q$,
	\item a set of transitions $\Delta \subseteq Q\times \Sigma \times Q$,
	\item a set of initial states $Q_\init\subseteq Q$, and
	\item an "acceptance condition".
\end{itemize}
\AP A ""run"" over a (finite or infinite) word $w$ is a path in the automaton starting in $Q_\init$ and with transitions labelled by the letters of $w$.
The "acceptance condition" is thus a representation of the set of paths that are accepting.

If the automaton works over finite words, the "acceptance condition" usually takes the form of a subset of final states: a run is accepting if it ends in one of them (see Section~\ref{sec:finite-words} for further discussions on finite words).
For automata over infinite words the situation is more complicated. Several acceptance conditions are commonly used, but they differ in expressive power and the complexity of related problems (see for instance~\cite{Boker18WhyTypes}).
The main focus of this paper is the following dichotomy: Should we use states or transitions to encode the acceptance condition of automata over infinite words? More formally, we will consider "acceptance conditions" of one of the following forms.	

\AP A ""state-based acceptance condition"" is a language $\acc\subseteq Q^\omega$.
\AP A ""transition-based acceptance condition"" is a language $\acc\subseteq \Delta^\omega$.
Usually, we represent them via a finite set of colours $C$, a colouring function $\gamma\colon Q\to C$ (resp. $\gamma\colon \Delta \to C$) and a language $\acc'\subseteq C^\omega$.
That is, we see automata as transducers $\Sigma^\omega \to C^\omega$, and the acceptance condition is given by a subset of the image. 
\AP Two languages that are commonly used as "acceptance conditions" are:
 
\begin{itemize}
	\item $\intro*\Buchi = \{w\in \{-, \bullet\}^\omega \mid w \text{ contains } \bullet \text{ infinitely often}\}$.  We may refer to states (resp. transitions) coloured with $\bullet$ as ""accepting@@states"".
	\item $\intro*\coBuchi = \{w\in \{-,\text{\ding{55}}\}^\omega \mid w \text{ contains } \text{\ding{55}} \text{ finitely often}\}$.
\end{itemize}

We show examples of "B\"uchi" automata in Figure~\ref{fig:Buchi-automata}. 

\begin{figure}[ht]
	\centering
	\hspace{5mm}
	\begin{minipage}[c]{0.45\textwidth} 
		\includegraphics[scale=1.1]{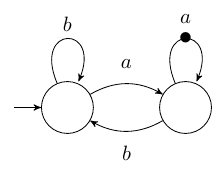}
	\end{minipage} 
	\hspace{-5mm}
	\begin{minipage}[c]{0.45\textwidth} 
		\includegraphics[scale=1.1]{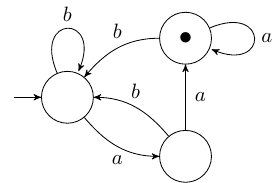}
	\end{minipage} 
	\caption{Two "B\"uchi" automata recognising the language of words containing infinitely many factors `$aa$'.
	The automaton on the left uses "transition-based acceptance", while the automaton on the right is "state-based".}
	\label{fig:Buchi-automata}
\end{figure}

\vspace{-2mm}
\subparagraph{The origins.} 
Automata over infinite words were first introduced by B\"uchi in the 60s~\cite{Buchi1962decision}, using a formalism that put the acceptance condition over states.\footnote{Corroborating this claim can be quite challenging. The use of state-based acceptance can be observed, for instance, in the first line of the proof of Lemma~12 (page 8).	In B\"uchi's 1969 paper with Landweber~\cite{BL69Definability}, this is a bit simpler to appreciate in the definitions of $\mathrm{Sup}Z$ and $U$, in the second page of the paper.}
The tradition of employing state-based acceptance persisted in all subsequent classic foundational works on $\oo$-automata: Muller's paper at the origin of the "Muller condition"~\cite{Muller1963infSequences}, Landweber's study of the complexity of $\oo$-regular languages~\cite{Landweber69OmegaAut}, McNaughton's works on $\oo$-regular expressions~\cite{McNaughton1966Testing} and infinite games~\cite{McNaughton93InfiniteGames}, Rabin's decidability result of S2S~\cite{Rabin69S2S}, Wagner's paper introducing a hierarchy of complexity~\cite{Wagner1979omega}, etc. Following this tradition, virtually all handbooks and surveys about automata on infinite objects use state-based acceptance~\cite{Eilenberg74A, Thomas90, Thomas97LangAutLog, GradelThomasWilke2002AutLogGames, PP04InfiniteWords, BK08PrinciplesMC, Kupferman2018Handbook, BloemCJ2018Handbook, WS21HandbookAut, Loding21Infinite,Esparza2023automata}. To the best of our knowledge, the only exceptions are the recent book \emph{Games on Graphs} edited by Fijalkow~\cite{bookGames25}, and the book \emph{An Automata Toolbox} by Bojańczyk~\cite{automataToolbox25}.

\subparagraph{The rise of transition-based acceptance.} 
Automata with ``effects'' on transitions, such as sequential transducers\footnote{Transducers with output on states were  also considered by Moore~\cite{Moore56}. However, the model with output on transitions popularized by Mealy~\cite{Mealy55} rapidly became the norm.}~\cite[Sect.8]{Shannon38}\cite{Mealy55,Schutzenberger61Transducers}
or weighted automata~\cite{Schutzenberger61Family} have been considered since the beginnings of automata theory.
"Transition-based" "$\omega$-automata" made their first, though modest, appearance in the mid-80s.
To the best of our knowledge, their first occurrences were in Michel's work on the connection between 
\AP \emph{Linear Temporal Logic} ($\intro*\LTL$) and automata~\cite{Michel1984}, and in Kurshan's paper on the complementation of deterministic B\"uchi automata~\cite{Kurshan1987}.\footnote{The possibility of using "transition-based" acceptance was previously suggested in~\cite[Section~8.2]{Parker81}. Some sources~\cite{Redziejowski99} mention that "transition-based" acceptance was already suggested by Redziejowski in 1972~\cite{Redziejowski72}; unfortunately we could not get access to this paper.}
In the early 90s, Le Saëc made more systematic use of this model~\cite{Saec90Saturating,SPW91Semigroups,DSL95}.
He reintroduced "transition-based" "Muller" automata under the name of table-transition automata, and characterised which languages admit a unique \emph{morphism-minimal} "Muller" automaton: those that can be recognised by a "Muller" automaton with one state per residual of the language~\cite[Cor.~5.15]{DSL95}. This characterisation no longer holds for "state-based" automata (see~Example~\ref{ex:automaton-residuals} for an illustration on how the previous property is sensitive to the placement of the "acceptance condition").
Despite the works of Le Saëc, "transition-based" automata were used only scarcely in the following years. 

Some notable exceptions to the predominant use of "state-based" automata in the 2000s are given by a series of works concerning the translation of $\LTL$ formulas to automata.
In 1999, Couvreur proposed a translation using "transition-based" "generalised B\"uchi" automata~\cite{Couvreur99}.
A similar algorithm was the base for the tool \texttt{ltl2ba} by Gastin and Oddoux~\cite{GO01} (the importance of the use of "transition-based" automata in this work is discussed in~\cite{GL02Transitions}). The use of "transition-based" acceptance in this subarea was further fostered by the tool Spot~\cite{DP04Spot,Spot2.10CAV22}, influenced by Couvreur's approach.
More recently, "transition-based" automata have been adopted in the HOA format~\cite{HOA15}, and it is the primary model in other tools such as Owl~\cite{KMS18Owl} or \texttt{ltl3tela}~\cite{MBSSZ19}.
We refer to~\cite[pages 66-67]{Duret07PhD} for an overview of the use of state-based and transition-based approaches to the translation of $\LTL$ prior to 2007. 

A turning point occurred in 2019, as Abu Radi and Kupferman proved that "transition-based" "history-deterministic" "coB\"uchi" automata can be minimised in polynomial time~\cite{AK19Min}, while Schewe showed that the corresponding problem is $\NP$-complete for "state-based" automata~\cite{Schewe20MinimisingGFG}.
Since then, there is an increasing interest for "transition-based" $\oo$-automata, and, as discussed in Sections~\ref{sec:automata} and~\ref{sec:games}, many recent results rely on the use of this model.

\subparagraph{Why was the use of state-based acceptance widespread?}
We may wonder why "state-based" "automata" were the ubiquitous model for more than 50 years. 
Probably the most influential factor is that $\omega$-automata generalise automata over finite words, for which acceptance over states is a natural choice. 
Some constructions of $\oo$-automata build on automata over finite words, and for some of these, "state-based" acceptance appears naturally.

One such example is the characterisation of languages recognised by deterministic "B\"uchi" automata as limits of languages of finite words~\cite{Landweber69OmegaAut}. 
A language $L\subseteq \SS^\oo$ can be recognised by a deterministic "B\"uchi" automaton if and only for some regular language of finite words $L_{\mathsf{fin}}\subseteq \SS^*$ we have:
\[ L = \overrightarrow{L_{\mathsf{fin}}} = \{w\in \SS^\oo \mid w \text{ contains infinitely many prefixes in } L_{\mathsf{fin}}\}. \]
Building a "state-based" "B\"uchi" automaton from a deterministic automaton recognising $L_{\mathsf{fin}}$ is easy: we just need to interpret the final states of the automaton as "accepting B\"uchi states". The converse direction follows similarly.

\subparagraph{Structure of the survey.} 
We start by showing in Section~\ref{sec:st-to-trans} that we can switch between state and "transition-based" acceptance with at most a linear blow-up. However, we already notice a key difference: going from a "state-based" automaton to a "transition-based" one does not require adding any additional state, while deciding the minimal number of states required to perform the converse transformation is $\NP$-hard (Proposition~\ref{prop:from-trans-to-states-optimal-NP}). 
In Sections~\ref{sec:automata} and~\ref{sec:games}, we study problems on "$\oo$-automata" and games where the choice between "transition-based" and "state-based" acceptance may strongly affect the complexity of a given problem. 
In Section~\ref{sec:finite-words} we explore "transition-based acceptance" for automata over finite words.
Finally, in Section~\ref{sec:conclusion} we discuss some of the reasons causing the striking differences between the two models.

Definitions are introduced progressively as needed. 
The reader may use the hyperlinks on technical terms to quickly see their definition.

\section{From states to transitions and vice versa}\label{sec:st-to-trans}
At first sight, it could seem that there is no great difference between "state-based" or "transition-based" acceptance: we can go from one model to the other with at most a linear blow-up. However, "transition-based" "automata" are always smaller, and going from a "state-based" "automaton" to a "transition-based" one in an optimal way is $\NP$-hard, as stated in Proposition~\ref{prop:from-trans-to-states-optimal-NP}.

\begin{proposition}\label{prop:from-states-to-trans}
	Every "state-based" automaton can be relabelled with an equivalent "transition-based acceptance condition".
\end{proposition}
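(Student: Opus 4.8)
The plan is to leave the transition structure of the automaton untouched and simply transfer the colour of each state onto its \emph{outgoing} transitions. Concretely, suppose the "state-based acceptance condition" of an "automaton" $\A$ is presented by a colouring $\gamma\colon Q\to C$ together with a language $\acc\subseteq C^\omega$; if it is instead given as a bare language $\acc\subseteq Q^\omega$, take $C=Q$ and $\gamma$ the identity. I would then define a "transition-based" colouring $\gamma'\colon \Delta\to C$ by $\gamma'(p,a,q)=\gamma(p)$ and keep the \emph{same} acceptance language $\acc\subseteq C^\omega$, obtaining a "transition-based acceptance condition" on the same underlying automaton.

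The point of colouring a transition by its \emph{source} (rather than its target) is that the colour sequence read along the transitions of a "run" then coincides \emph{exactly}, with no index shift, with the colour sequence read along its states. Indeed, a run over an infinite word $a_1a_2\cdots$ has the form $\rho = q_0 \re{a_1} q_1 \re{a_2} q_2 \cdots$; its sequence of visited states is $q_0 q_1 q_2\cdots$ and its sequence of transitions is $e_1 e_2\cdots$ with $e_i=(q_{i-1},a_i,q_i)$, so that $\gamma'(e_1)\gamma'(e_2)\cdots = \gamma(q_0)\gamma(q_1)\gamma(q_2)\cdots$. Hence $\rho$ is accepted by $(\gamma',\acc)$ if and only if it is accepted by $(\gamma,\acc)$. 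Since $Q$, $\Delta$ and $Q_\init$ are unchanged, the set of runs over each input word is the same in both automata, and therefore they recognise the same language. Note that no state has been added, in accordance with the remark in the introduction that this direction of the translation is essentially free, and that the construction neither requires nor breaks "prefix-independence" of $\acc$.

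There is no real obstacle here; the only subtlety is the bookkeeping of indices between states and transitions, which is precisely why the source is used: colouring by the target, $\gamma'(p,a,q)=\gamma(q)$, would instead produce the shifted sequence $\gamma(q_1)\gamma(q_2)\cdots$, and establishing equivalence would then force $\acc$ to be "prefix-independent". Equivalently, one may phrase the argument abstractly: the map $\st$ sending a composable sequence of transitions $e_1 e_2\cdots$ to its sequence of sources is a well-defined function from the set of runs into $Q^\omega$, and $\acc' := \st^{-1}(\acc)$ is the desired "transition-based acceptance condition".
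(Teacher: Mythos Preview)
Your proof is correct and follows the same approach as the paper: colour each transition by its source state and keep the same acceptance language $\acc$. The paper's proof is a two-line version of what you wrote; your additional discussion of the index bookkeeping and of why the source (rather than the target) is used is accurate and makes explicit what the paper leaves implicit.
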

\begin{proof}
	Let $\acc\subseteq Q^\omega$ be the acceptance condition of the automaton, and let $\gamma\colon \Delta \to Q$ be the function assigning to each transition $(q,a,q')$ its source state $q$. Then, $(\gamma,\acc)$ is an equivalent "transition-based acceptance condition".
\end{proof}

In general, we cannot relabel in a similar manner a "transition-based automaton" to obtain an equivalent "state-based" one.
We can, however, build an equivalent "state-based automaton" paying a small blow-up on the number of states. 

\begin{proposition}\label{prop:from-trans-to-states}
	Every "transition-based" automaton admits an equivalent "state-based" automaton with at most $|Q||\Delta|+|Q_\init|$ states.
\end{proposition}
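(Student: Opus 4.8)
The plan is to build the state-based automaton by remembering in each state the last transition that was taken. First I would define the new state space as $Q' = (Q \times \Delta) \cup Q_\init$, where a state $(q, e)$ records that we are currently at $q$ and that the last transition used was $e$; the copies in $Q_\init$ serve only as initial states, since at the very start no transition has yet been taken. The new set of initial states is $Q_\init$ itself. For transitions: from a state $(q, e)$, on letter $a$, for every transition $e' = (q, a, q') \in \Delta$ of the original automaton, we add a transition to $(q', e')$; and from an initial state $q \in Q_\init$, on letter $a$, for every $e' = (q, a, q') \in \Delta$ we add a transition to $(q', e')$. This gives $|Q'| = |Q||\Delta| + |Q_\init|$ states, as claimed.

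Next I would transfer the acceptance condition. If the original transition-based condition is $\acc \subseteq \Delta^\omega$, then along any run of the new automaton the sequence of visited states (after the first, initial one) is exactly a sequence in $(Q \times \Delta)^\omega$, whose projection onto the second coordinate is precisely the sequence of transitions taken by the corresponding run of the original automaton. So I define $\gamma' \colon Q' \to \Delta$ by $\gamma'(q,e) = e$ and, on the initial copies, arbitrarily (say $\gamma'$ picks any fixed transition — it is read only once and $\acc$ is prefix-independent, so the choice is irrelevant); the new state-based condition is then $\acc' = \{\, \rho \in \Delta^\omega \mid \rho \in \acc \,\}$ pulled back along $\gamma'$. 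Here I use the footnote's stipulation that $\acc$ is prefix-independent, so that the single spurious first symbol coming from the initial copy does not affect membership.

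Finally I would check correctness: there is a bijection between runs of the original automaton over a word $w$ and runs of the new automaton over $w$, given by $q_0 \xrightarrow{a_0} q_1 \xrightarrow{a_1} \cdots \;\mapsto\; q_0 \xrightarrow{a_0} (q_1, e_0) \xrightarrow{a_1} (q_2, e_1) \cdots$ where $e_i = (q_i, a_i, q_{i+1})$; this map is well-defined (each $(q_{i+1}, e_i)$ is a legal state and the transition exists by construction), injective, and surjective (any run of the new automaton has this shape by the definition of $\Delta'$). Under this bijection, the state sequence of the image run projects via $\gamma'$ to $e_0 e_1 e_2 \cdots$, i.e. to the transition sequence of the original run, so the image run is accepting under $(\gamma', \acc')$ iff the original run is accepting under $\acc$. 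Hence the two automata are equivalent. I do not expect any real obstacle here; the only point requiring care is the treatment of the first, artificial symbol of the colour sequence, which is exactly what prefix-independence of $\acc$ is there to handle — had we not assumed it, we would need an extra gadget or to absorb the initial states differently.
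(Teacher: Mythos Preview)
Your proposal is correct and follows essentially the same construction as the paper: the same state space $(Q\times\Delta)\cup Q_\init$, the same transitions, and the same idea of projecting the second component to recover the original transition sequence. The only cosmetic difference is that the paper colours the initial copies with a fresh symbol $x$ and takes the acceptance language $x\cdot\acc$, whereas you assign an arbitrary transition colour to the initial copies and invoke prefix-independence of $\acc$ to ignore it; both handle the spurious first symbol equally well.
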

\begin{proof}
	Let $\A$ be a "transition-based" automaton with acceptance $\acc\subseteq \Delta^\omega$.
	We define the automaton having:
	\begin{itemize}
		\item States: $(Q\times \Delta) \cup Q_\init$.
		\item Transitions: For every transition $t' = q\re{a} q'$ in $\A$, we let $(q,t) \re{a} (q',t')$, and $q \re{a} (q',t')$ if $q\in Q_\init$.
		\item Initial states: $Q_\init$.
		\item Acceptance condition: We define $\gamma\colon Q \to \Delta\cup\{x\}$ by: $\gamma(q,t) = t$ and $ \gamma(q_0)=x$ if $q_0\in Q_\init$. The "acceptance condition" is given by the colouring $\gamma$ and the language $x\cdot \acc$. 
	\end{itemize}
	It is immediate to check that the obtained automaton is equivalent to $\A$.
\end{proof}

In both proofs above, the obtained automaton is not only equivalent to the original one, but there is a bijection between the runs of both.
We formalise this idea with the notion of \emph{"locally bijective morphisms"}~\cite[Def.3.3]{CCFL24FromMtoP}.

\AP Given two automata  $\A, \A'$ over the same alphabet, a ""locally bijective morphism"" is a function $\pp\colon Q\to Q'$ such that:
\begin{itemize}
	\item $\pp(Q_\init) = Q_\init'$,
	\item for all $(q,a,q')\in \Delta$, $(\pp(q),a,\pp(q'))\in \Delta'$,
	\item for all $(p,a,p')\in \Delta'$ and $q\in \pp^{-1}(p)$, there is $q'\in \pp^{-1}(p')$ such that $(q,a,q')\in \Delta$, and
	\item a "run" $\rho$ in $\A$ is accepting if and only if $\pp(\rho)$ is accepting in $\A'$.
\end{itemize}

Intuitively, if $\pp\colon \A \to \A'$ is a "locally bijective morphism", it means that $\A$ has been obtained from $\A'$ by duplicating some of its states, for instance, via a product construction. 
For example, the automaton on the right of Figure~\ref{fig:Buchi-automata} admits a "locally bijective morphism" to the automaton on its left. 

Proposition~\ref{prop:from-states-to-trans} implies that for every "state-based" automaton there is a "transition-based" automaton of same size admiting a "locally bijective morphism" to it (the automaton itself).
However, in the other direction, deciding whether there is a small "state-based" automaton admiting a "locally bijective morphism" towards a given "transition-based" automaton is hard, already for "B\"uchi automata". 

\begin{proposition}\label{prop:from-trans-to-states-optimal-NP}
	The following problem is $\NPc$:
	\begin{center}
		\hspace{-3mm}
	\begin{tabular}{rl}
				{Input:} & A "transition-based" "B\"uchi" automaton $\A_{\tr}$ and a positive integer $n$.\\[1mm]   
				{Question:} & Is there a "state-based" "B\"uchi" automaton with $n$ states admitting\\
				&  a "locally bijective morphism" to $\A_{\tr}$?  
		\end{tabular}
	\end{center}
\end{proposition}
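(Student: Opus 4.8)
The upper bound ($\NP$ membership) should be routine: guess a candidate $\omega$-automaton with $n$ states together with a function $\pp$, and check in polynomial time that $\pp$ is a locally bijective morphism — all four conditions in the definition are easy to verify, and the acceptance-compatibility condition can be checked on a Büchi automaton by comparing accepting states along the fibres of $\pp$. So the heart of the proof is $\NP$-hardness, and the plan is to reduce from an $\NP$-complete covering/partition-style problem, most naturally \pbVertexCover{} on graphs (or a colourability variant), exactly as is typical for these ``minimal number of states'' lower bounds.

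**The reduction.** Given a graph $G=(V,E)$ and an integer $k$, I would build a small transition-based Büchi automaton $\A_{\tr}$ whose structure forces any state-based automaton admitting a locally bijective morphism to it to ``split'' certain states, where the number of splits needed is governed by a vertex cover (or colouring) of $G$. The key mechanism: in a transition-based Büchi automaton a single state $q$ can have some incoming transitions accepting and others not; when we pull this back along a locally bijective morphism to a state-based automaton, the preimages of $q$ must be partitioned into an ``accepting'' class and a ``rejecting'' class, and the local-bijectivity constraints on which transitions must be present force incompatible pairs of incoming-edge-patterns to land in different preimages. Encoding the edges of $G$ as such incompatibility constraints, one gets that the minimum number of states of a state-based automaton with a locally bijective morphism to $\A_{\tr}$ equals $|Q_{\tr}|$ plus (essentially) the size of a minimum vertex cover of $G$, so setting $n = |Q_{\tr}| + k$ yields the equivalence. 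I would keep $\A_{\tr}$ with $O(|V|+|E|)$ states and all transitions explicitly listed, so the reduction is clearly polynomial.

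**Main obstacle.** The delicate part is designing the gadget so that (i) splitting a state is \emph{forced} precisely when two incident edges create a conflict — i.e. that the local-bijectivity conditions genuinely prevent merging — and (ii) no ``cheating'' is possible: the adversary must not be able to save states elsewhere (e.g. by exploiting that acceptance conditions need only be prefix-independent, or by rerouting runs through unrelated states). Getting a tight correspondence in \emph{both} directions — every vertex cover gives a state-based automaton of the claimed size, and conversely every such automaton induces a vertex cover of the same size — is where the real care lies; one typically needs rigidity gadgets (e.g. unique letters on certain transitions, or long distinguishing paths) so that the morphism is essentially forced to look like a product with a small memory structure. I expect verifying this rigidity, rather than the combinatorial count itself, to be the crux of the argument, and it is the step I would write out in full detail.
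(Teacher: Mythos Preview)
Your plan is correct and matches the paper's approach: reduce from \pbVertexCover{}, build a transition-based B\"uchi automaton where each vertex needs a rejecting copy (forced by a rejecting self-loop) and each edge forces at least one endpoint to also have an accepting copy, so the minimum size is $|V|+k$ with $k$ the vertex-cover number; for $\NP$ membership, guess and check.

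The one place you overshoot is the anticipated difficulty. The paper's construction is much lighter than you expect: the automaton has exactly $|V|$ states (not $O(|V|+|E|)$), alphabet $V$, transitions $u\re{v}v$ for each edge and each self-loop, with only the self-loops non-accepting. No rigidity gadgets are needed. The point is that a locally bijective morphism already pins down the fibre structure completely: for each $v$ the rejecting run $v^\omega$ forces a non-accepting state in $\pp^{-1}(v)$, and for each edge $(u,v)$ the accepting run ending in $(uv)^\omega$ forces an accepting state in $\pp^{-1}(u)\cup\pp^{-1}(v)$; hence the vertices whose fibre contains an accepting state form a vertex cover. There is no room for ``cheating elsewhere'' precisely because the morphism is locally bijective, so every state of the state-based automaton sits over a unique vertex and contributes only to that fibre. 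Your instinct to worry about rigidity is sound in general, but here the definition of locally bijective morphism does all the work for free.
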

\begin{proof}
	To show $\NP$-hardness, we use the reduction from $\pbVertexCover$ given by Schewe to show the $\NP$-completeness of the minimisation of "state-based" deterministic "B\"uchi" automata~\cite{Schewe10MinimisingNPComplete}. 
	
	Let $G = (V,E)$ be an undirected graph. Consider the "B\"uchi" automaton $\A_G$ over the alphabet $\Sigma = V$ with states $Q_G =V$, all of them initial, and transitions $u\re{v} v$ for every $(u,v)\in E$, and for $u=v$. For the $\Buchi$ "condition", all transitions are "accepting@@state" except the self-loops $v\re{v}v$.
	This automaton recognises the paths in $G$, allowing repetition of vertices, but that visit at least two different vertices infinitely often.

	Let $k$ be the size of a minimal vertex cover of $G$. We claim that there is a "state-based" "B\"uchi" automaton with $|V|+k$ states admitting a "locally bijective morphism" to $\A_G$, and that this is optimal.
	To obtain such a "state-based" automaton, we duplicate every state $v$ that is part of a given vertex cover. Let $v_{\bullet},v_{-}$ be the two copies of this state, and set $v_{\bullet}$ to be an "accepting state". 
	Among non-duplicated states, transitions are as in $\A_G$.
	For duplicated states, we let $v_i \re{v} v_{-}$ for $i \in\{-,\bullet\}$ and $u_i\re{v}v_{\bullet}$ for $(u,v)\in E$. It is easy to chech that $\pp(v_i) = v$ defines a "locally bijective morphism".

	For the converse direction, let  $\A$ be a "state-based" "B\"uchi" automaton and $\pp\colon \A \to \A_G$ a "locally bijective morphism". For every state $v$ in $\A_G$, $\pp^{-1}(v)$ must contain a non-"accepting state", as a run ending in $v^\omega$ is rejecting in $\A_G$.
	We claim that the set of vertices such that  $\pp^{-1}(v)$ contains an "accepting state" is a vertex cover of $G$.
	Indeed, for every edge $(u,v)\in E$, a word ending in $(uv)^\omega$ is accepting in $\A_G$, therefore, either $\pp^{-1}(u)$ or $\pp^{-1}(v)$  contains an "accepting state".

	The problem is in $\NP$, as there is always such an automaton with $2|Q|$ states. For $n< 2|Q|$, it suffices to guess an automaton $\A_{\st}$ with $n$ states and a "locally bijective morphism" $\pp\colon \A_{\st} \to \A_{\tr}$. 
	\end{proof}

	In our opinion, the above propositions indicate that "state-based" acceptance is often innapropriate.
    We believe that, in an ideal scenario, each state of a minimal automaton should stand for some semantic properties of the language they represent (in the case of automata over finite words, these are the residuals of the language).
	This cannot be the case for "state-based" "$\oo$-automata", as some states must be allocated to encode parts of the acceptance condition.

\section{Minimisation and transformations of automata}\label{sec:automata}
In this section we study three problems relating to $\omega$-automata: minimisation, conversion of "acceptance condition" and determinisation. 
We discuss how the use of "transition-based" or "state-based" acceptance can critically affect these problems. 

\subsection{Minimisation of coB\"uchi automata}
\AP The ""minimisation problem"" asks, given an automaton and a number $n$, whether there is an equivalent automaton with at most $n$ states.
This problem admits different variants, depending on the class of automata that constitutes the search space (here we  assume that this class is the same for the input and output automata).

In 2010, Schewe showed that the "minimisation problem" is $\NP$-hard for most types of deterministic "state-based" $\omega$-automata, including "B\"uchi", "coB\"uchi" or "parity"~\cite{Schewe10MinimisingNPComplete}.
It came as a surprise when Abu Radi and Kupferman showed that "history-deterministic" "coB\"uchi" automata can be minimised in polynomial time~\cite{AK22MinimizingGFG} (conference version from 2019~\cite{AK19Min}).
Soon after, Schewe showed that the same problem is $\NP$-hard for "state-based" automata.\footnote{Note that the critical difference lies in the output class, as we can convert the input from "state-based" to "transition-based" in polynomial time.}

\AP An automaton is ""history-deterministic"" (abbreviated HD) if there is a resolver $\sigma\colon \Sigma^*\times \Sigma \to \Delta$, such that for every word $w$ accepted by the automaton, the run over $w$ built following the transitions given by $\sigma$ is accepting.
"History-deterministic" "coB\"uchi" automata are as expressive as deterministic ones, but they can be exponentially more succinct~\cite{KS15DeterminisationGFG}.

\begin{proposition}[\cite{AK22MinimizingGFG},\cite{Schewe20MinimisingGFG}]
	"History-deterministic" "transition-based" "coB\"uchi" automata can be minimised in polynomial time.

	The "minimisation problem" for "history-deterministic" "state-based" "coB\"uchi" automata is $\NPc$.
\end{proposition}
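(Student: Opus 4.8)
This statement collects two independent results, which I sketch in turn.

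\textbf{Polynomial time for the "transition-based" case~\cite{AK22MinimizingGFG}.}
The plan is to attach to each $\oo$-regular language recognised by some "history-deterministic" "coB\"uchi" automaton a \emph{canonical} minimal such automaton, computable in polynomial time from any "history-deterministic" "coB\"uchi" automaton for the language. The key structural observation is that, since membership in $\coBuchi$ only depends on the tail of a run, a run is accepting precisely when it eventually remains forever inside a single strongly connected component --- a \emph{safe component} --- of the \emph{safe subgraph}, obtained by deleting the rejecting transitions. First I would normalise the input so that its safe subgraph is deterministic, the automaton is semantically deterministic (all $a$-successors of a state carry the same residual), and every state has non-empty \emph{safe language} (the safety language read from it using safe transitions only). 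The heart of the argument is a description of minimal "history-deterministic" "coB\"uchi" automata in these terms: a state is determined by its residual together with its safe language, so (i)~two states with the same safe language and compatible residual may be merged, and (ii)~a safe component whose safe language is contained in the union of the other safe components reachable by words with the same residual is redundant and may be removed. Exhaustively merging and pruning yields the canonical automaton. The main obstacle, and the technical core of~\cite{AK22MinimizingGFG}, is proving that this automaton is equivalent to the input, minimal among "history-deterministic" "coB\"uchi" automata, and canonical (independent of the order of operations); this relies on a careful analysis of which safe components a resolver can genuinely make use of. Since safe languages are safety languages, hence comparable in polynomial time, and the merging and pruning loops run for polynomially many polynomial steps, the answer to the "minimisation problem" follows in polynomial time.

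\textbf{$\NPc$ for the "state-based" case~\cite{Schewe20MinimisingGFG}.}
For membership in $\NP$, I would guess a "state-based" "coB\"uchi" automaton $\A$ with at most $n$ states together with a resolver for it, check in polynomial time that the resolver witnesses history-determinism, and check in polynomial time that $\A$ is equivalent to the input --- equivalence of "history-deterministic" "$\oo$-automata" being decidable in polynomial time, e.g.\ by solving a parity game of polynomial size on the product. For $\NP$-hardness, I would reduce from \pbVertexCover{}, via a "coB\"uchi" variant of the construction behind Proposition~\ref{prop:from-trans-to-states-optimal-NP}. Given a graph $G=(V,E)$, take the deterministic "transition-based" "coB\"uchi" automaton with state set $V$, all states initial, whose self-loops $v \re{v} v$ are safe and all transitions $u \re{v} v$ with $(u,v)\in E$ are rejecting; it has $|V|$ states and recognises the eventually-constant paths of $G$. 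In any equivalent "state-based" automaton, each vertex must carry a safe copy (to accept $v^\oo$), while each edge $(u,v)$ forces a rejecting copy of $u$ or of $v$ (to reject $(uv)^\oo$); the vertices owning a rejecting copy thus form a vertex cover, so the automaton has at least $|V|+k$ states, where $k$ is the minimum size of a vertex cover of $G$, and duplicating exactly the vertices of a minimum cover achieves $|V|+k$. Hence the smallest equivalent "state-based" automaton --- even when it is allowed to be "history-deterministic" --- has exactly $|V|+k$ states, so deciding whether $n \ge |V|+k$ amounts to the vertex-cover question. The subtle point, and the reason this is harder than the deterministic case of~\cite{Schewe10MinimisingNPComplete}, is the lower bound: one must rule out that the extra freedom of a "history-deterministic" resolver permits recognising the language with fewer than $|V|+k$ states, which is why the gadget has to be made rigid --- already "history-deterministic", with no succinctness slack of its own.
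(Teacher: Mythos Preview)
The paper itself gives no proof of this proposition: it is stated with citations to~\cite{AK22MinimizingGFG} and~\cite{Schewe20MinimisingGFG} and left at that. So there is nothing to compare against; you have gone well beyond the paper by sketching the contents of the cited works, and at the level of a survey sketch your outline of both halves is broadly faithful to the sources.

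That said, there is one genuine slip in the $\NP$-membership argument. You write ``guess a state-based coB\"uchi automaton $\A$ with at most $n$ states together with a resolver for it'': a resolver is a function $\Sigma^*\times\Sigma\to\Delta$, an infinite object, and cannot be guessed. The correct argument is to guess only $\A$ and then \emph{check} in polynomial time that $\A$ is history-deterministic (which is known to be decidable in polynomial time for coB\"uchi automata) and that it is language-equivalent to the input. Your sentence as written is not a valid $\NP$ certificate.

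On the hardness side, your sketch is honest about where the difficulty lies --- ruling out that history-determinism buys any succinctness on the gadget --- but the argument you actually give (``each edge $(u,v)$ forces a rejecting copy of $u$ or of $v$'') is the deterministic lower bound from~\cite{Schewe10MinimisingNPComplete}, not the HD one. In an HD automaton there is no a priori reason the accepting run on $(uv)^\omega$ must cycle through copies tied to $u$ and $v$ specifically, so this step does need the additional rigidity argument you allude to; as a sketch that flags the gap this is acceptable, but be aware that what you wrote before the ``subtle point'' disclaimer does not by itself establish the bound for HD automata. A minor further point: describing the gadget as ``deterministic'' while also taking ``all states initial'' is inconsistent; Schewe's construction adds a distinguished initial state (or fixes one), and you should do the same.
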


The work of Abu Radi and Kupferman provided the basis of many subsequent results, including new representations for $\omega$-regular languages~\cite{EhlersSchewe22NaturalColors,Ehlers25Rerailing}, minimisation of "HD" "generalised coB\"uchi" automata~\cite{CIKM025}, passive learning of "HD" "coB\"uchi" automata~\cite{LW25congruences} and characterisations of "positional languages"~\cite{CO24Positional}.
The "transition-based" assumption is essential to all these works.

Schewe's proof of $\NP$-hardness of the "minimisation" of deterministic "state-based" "B\"uchi" automata~\cite{Schewe10MinimisingNPComplete} strongly relies on putting the acceptance over states.
In fact, as we have seen in Proposition~\ref{prop:from-trans-to-states-optimal-NP}, what this reduction shows is that finding a minimal "state-based" automaton that simulates a "transition-based" one is $\NP$-hard.
It was not until 2025 that the "minimisation" of deterministic "transition-based" "B\"uchi" and "coB\"uchi" automata was shown to be $\NP$-hard, requiring a highly technical proof~\cite{AE25Minimisation}.

\subsection{Translation from Muller to parity}
The complexity of the "acceptance condition" used by an automaton may greatly affect the computational cost of dealing with these automata.
Namely, many problems are $\PSPACE$-hard for "Muller" automata~\cite{Dawar2005ComplexityBounds}, but become tractable for "parity" "automata"~\cite{CJKLS22,Boker18WhyTypes}.
Therefore, an important task is to simplify the "acceptance condition" of a given automaton.
In practice, this usually takes the following form: given an "automaton" using a "Muller" "condition", build an equivalent automaton using a "parity" condition. 

\AP The "parity" and "Muller" conditions are defined as follows:
\begin{itemize}
    \item $\intro*\parity(d) = \{w\in \{1,\dots,d\}^\omega \mid \liminf w \text{ is even}\}$.
	\item $\intro*\Muller(\F) = \{w\in C^\omega \mid \infOften(w)\in \F\}$, for $\F\subseteq \P(C)$ a family of subsets and $\intro*\infOften(w)$ the set of colours that appear infinitely often in $w$.
\end{itemize}

Recently, an optimal transformation  has been introduced -- based on a structure called the \emph{Alternating Cycle Decomposition} (ACD) -- transforming a "Muller" automaton $\A$ into a parity one~\cite{CCFL24FromMtoP}.
Formally, it produces a "transition-based" "parity" automaton that admits a "locally bijective morphism" to $\A$ and with a minimal number of states among "parity" automata admiting such a morphism.
This transformation can be performed in polynomial time provided that the ACD can be computed efficiently; this is the case for example if the "acceptance condition" of $\A$ is "generalised B\"uchi", defined as follows: \AP
\begin{itemize}
	\item $\intro*\genBuchi = \{w\in \P(C)^\omega \mid \bigcup\limits_{A\in \infOften(w)}A = C \}$.
\end{itemize}

\begin{proposition}[{Follows from~\cite[Thm. 5.35]{CCFL24FromMtoP}}]
	Given a "generalised B\"uchi" automaton~$\A$, we can build in polynomial time a "transition-based" "B\"uchi" automaton admiting a "locally bijective morphism" to $\A$ that has a minimal number of states among "B\"uchi" automata admitting "locally bijective morphisms" to $\A$.
\end{proposition}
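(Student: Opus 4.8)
The plan is to derive this as a special case of the Alternating Cycle Decomposition transformation of~\cite[Thm.~5.35]{CCFL24FromMtoP}, and the work reduces to checking two things: (1)~that the ACD of a "generalised B\"uchi" automaton is computable in polynomial time, and (2)~that the "parity" automaton it produces is in fact a "B\"uchi" automaton and is optimal among "B\"uchi" automata admitting "locally bijective morphisms" to~$\A$, not merely among "parity" automata. First I would recall the general statement: given a "Muller" automaton $\A$, the ACD construction outputs a "transition-based" "parity" automaton $\A_{\mathrm{ACD}}$ together with a "locally bijective morphism" $\pp\colon \A_{\mathrm{ACD}} \to \A$, and $\A_{\mathrm{ACD}}$ has the minimal number of states among all "parity" automata admitting a "locally bijective morphism" to $\A$; moreover, the number of priorities used on each "run" is minimal. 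The ACD is built from the cycles of $\A$ together with the information of which cycles are accepting; this is governed, for each strongly connected component, by an alternating tree recording the nesting of accepting and rejecting cycles.

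Next I would analyse the shape of this tree when the "acceptance condition" is "generalised B\"uchi". The key observation is that a "generalised B\"uchi" "condition" $\genBuchi$ on colours $C$ is a union-closed, upward-closed Muller "condition": a cycle is accepting if and only if the colours it visits cover $C$ (equivalently, for each $A$ in the defining family, the cycle meets a colour of $A$). In particular, within any strongly connected subgraph, if a cycle is accepting then \emph{every} cycle containing it is accepting, and the union of two accepting cycles is accepting. This means the alternating tree of each SCC has depth at most~$2$: the root is a rejecting node (the whole SCC need not be accepting), and its children are the maximal accepting sub-SCCs, which are leaves. A depth-$2$ alternating tree of this polarity is exactly what yields a "B\"uchi" "condition" — the ACD assigns priority $2$ to transitions staying inside a maximal accepting sub-SCC and priority $1$ otherwise — so $\A_{\mathrm{ACD}}$ is a "B\"uchi" automaton. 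Since the ACD minimises the number of priorities used over each "run", and $\A$ is not itself (in general) "parity" with a single "run"-priority, any "parity" automaton with a "locally bijective morphism" to $\A$ that uses at most the priorities $\{1,2\}$ is a "B\"uchi" automaton; conversely the minimality of $\A_{\mathrm{ACD}}$ among all "parity" automata with such a morphism shows it is in particular minimal among "B\"uchi" ones.

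For the polynomial-time claim, I would argue that the ACD of a "generalised B\"uchi" automaton is computable efficiently: the only potentially expensive step in computing an ACD is deciding, for the relevant cycles, whether they are accepting and identifying the maximal accepting sub-SCCs, and for $\genBuchi$ this is a polynomial-time computation — within an SCC $S$, a sub-SCC $S'$ is accepting iff the colours appearing on $S'$ cover $C$, and the maximal accepting sub-SCCs can be found by a straightforward fixpoint/condensation argument on $S$. Because the alternating tree has depth~$\le 2$, the whole ACD has size polynomial in $\A$, and the product automaton $\A_{\mathrm{ACD}}$ is built in polynomial time.

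The main obstacle I expect is the second point of step~2: carefully justifying that minimality of $\A_{\mathrm{ACD}}$ \emph{among "parity" automata} transfers to minimality \emph{among "B\"uchi" automata}. A priori a "B\"uchi" automaton with a "locally bijective morphism" to $\A$ could be smaller than the smallest "parity" one, which would be a contradiction — so the real content is to confirm from~\cite{CCFL24FromMtoP} that the ACD automaton for a depth-$\le 2$ alternating tree genuinely lands in the "B\"uchi" class (rather than using priorities $\{0,1\}$ or $\{1,2,3\}$ in a way that only accidentally coincides with "B\"uchi"), so that the infimum over "parity" automata is attained by a "B\"uchi" automaton and hence equals the infimum over "B\"uchi" automata. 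This is essentially bookkeeping about the priority normalisation in the ACD, but it is where one must be precise.
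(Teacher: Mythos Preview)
The paper does not give a proof of this proposition; it simply records it as a consequence of~\cite[Thm.~5.35]{CCFL24FromMtoP} and moves on. Your proposal is therefore not to be compared against an existing argument but rather assessed as an elaboration of why the citation suffices, and in that respect it is essentially correct: invoking the ACD transformation, observing that for a generalised B\"uchi condition the alternating trees have depth at most~$2$ (so the output is a B\"uchi automaton and the ACD is polynomial-size), and noting that B\"uchi automata are parity automata so that the minimum over parity automata, being attained by a B\"uchi automaton, equals the minimum over B\"uchi automata.

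One technical slip worth fixing: you have the polarity of the alternating tree inverted. Since the generalised B\"uchi condition is upward-closed (if a set of transitions covers~$C$, so does any superset), an SCC whose transitions do \emph{not} cover~$C$ has no accepting subcycles at all --- such a root is a rejecting leaf. When the SCC \emph{does} cover~$C$, the root is accepting, and its children are the maximal \emph{rejecting} sub-SCCs (those missing some colour), which are leaves for the same upward-closure reason. So the tree has an accepting root with rejecting leaves, not the other way round; this is precisely the shape that yields a B\"uchi (not coB\"uchi) output under the ACD priority assignment. Your conclusion survives, but the description of which level is accepting and which priorities go where should be corrected.
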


However, the optimality result of the ACD-transformation strongly relies on the use of "transition-based" acceptance in the output automaton, as the previous problem becomes $\NP$-hard for "state-based" automata.

\begin{proposition}\label{prop:transformations-NP}
	The following problem is $\NPc$:
	\begin{center}
	\begin{tabular}{rl}
				{Input:} & A "state-based" "generalised B\"uchi" automaton $\A$ and an integer $n$.\\[1mm]   
				{Question:} & Is there a "state-based" "B\"uchi" automaton with $n$ states admitting\\
				&  a "locally bijective morphism" to $\A$?  
		\end{tabular}
	\end{center}
\end{proposition}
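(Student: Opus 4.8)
The plan is to show membership in $\NP$ by a standard guess-and-check, and $\NP$-hardness by a reduction from \pbVertexCover{} that recycles the automaton and the combinatorics of the proof of Proposition~\ref{prop:from-trans-to-states-optimal-NP}. Membership is routine: the round-robin degeneralisation of a state-based generalised B\"uchi automaton with state set $Q$ and colour set $C$ is a state-based B\"uchi automaton with $|C|\cdot|Q|$ states, and the projection forgetting the counter is a locally bijective morphism onto the original; moreover one may always duplicate a state of a witness, together with its incident transitions, to obtain a witness with one more state. Hence the answer is positive for every $n$ above the optimum, which is at most $|C|\cdot|Q|$, and for $n\le|C|\cdot|Q|$ one guesses an automaton with $n$ states and a candidate map $\pp$ and checks the clauses defining a locally bijective morphism; all are syntactic except ``$\rho$ accepting $\iff$ $\pp(\rho)$ accepting'', which is decidable in polynomial time by analysing, for each acceptance set, the strongly connected components of the guess obtained after deleting, respectively, its accepting states and the preimages of that set.

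For hardness, let $G=(V,E)$ be an undirected graph, with $|V|\ge 2$ and no loops. I build the state-based generalised B\"uchi automaton $\A$ over $\SS=V$ whose state set is $V$, with all states initial, transitions $u\re{v}v$ for every $(u,v)\in E$ and for $u=v$, and acceptance given by the colouring $\gamma\colon V\to\P(V)$, $\gamma(v)=V\setminus\{v\}$, together with the condition $\genBuchi$. Since $\bigcup_{v\in F}(V\setminus\{v\})=V$ holds exactly when $|F|\ge 2$, a run of $\A$ is accepting iff it visits at least two distinct states infinitely often. This $\A$ has the same states, initial states and transitions as the transition-based B\"uchi automaton $\A_G$ from the proof of Proposition~\ref{prop:from-trans-to-states-optimal-NP}, and also the same accepting runs: a run of $\A_G$ is accepting iff it takes a non-self-loop transition infinitely often, and --- since a run that from some point on uses only self-loops is eventually constant --- this is equivalent to visiting at least two states infinitely often.

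Because $\A$ and $\A_G$ agree on states, initial states, transitions and the set of accepting runs, a map $\pp$ is a locally bijective morphism from a state-based B\"uchi automaton $\B$ to $\A$ if and only if it is one to $\A_G$. The proof of Proposition~\ref{prop:from-trans-to-states-optimal-NP} shows that the least number of states of such a $\B$ equals $|V|+k$, where $k$ is the size of a minimum vertex cover of $G$; hence the answer to our problem on the instance $(\A,n)$ is positive iff $n\ge|V|+k$. The reduction therefore sends an instance $(G,m)$ of \pbVertexCover{} to $(\A,|V|+m)$, which is computed in polynomial time, and this finishes $\NP$-hardness.

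I expect the step needing most care to be the run-by-run identification of the accepting behaviour of $\A$ with that of $\A_G$: one must check that the equivalence holds for individual runs and not merely for the recognised languages, in particular that ``eventually only self-loops are used'' forces an eventually constant run --- which is correct because the loop $v\re{v}v$ does not change the current state. Once this is settled, transferring the four clauses of a locally bijective morphism between $\A$ and $\A_G$ is immediate, and everything else is inherited from Proposition~\ref{prop:from-trans-to-states-optimal-NP}. If one prefers not to route through $\A_G$, the alternative is to rerun Schewe's \pbVertexCover{} reduction~\cite{Schewe10MinimisingNPComplete} directly with this same $\A$ as input: for the upper bound, duplicate each vertex $v$ of a minimum vertex cover into an accepting copy $v_\bullet$ and a rejecting copy $v_-$, routing the loop $v\re{v}v$ into $v_-$ and every genuine edge into $v_\bullet$ (keeping the outgoing transitions on both copies); the matching lower bound comes as in Proposition~\ref{prop:from-trans-to-states-optimal-NP}, since the run $v^\omega$ forces a rejecting state in each $\pp^{-1}(v)$ and the runs $(uv)^\omega$ force the vertices whose fibre contains an accepting state to form a vertex cover. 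Checking that this duplication is a locally bijective morphism reduces once more to the same self-loop analysis.
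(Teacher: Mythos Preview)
Your proof is correct and follows exactly the approach the paper takes: the paper's proof is a two-sentence sketch that says to reuse the reduction from Proposition~\ref{prop:from-trans-to-states-optimal-NP} by replacing the transition-based B\"uchi condition on $\A_G$ with an equivalent state-based generalised B\"uchi condition, and you carry out precisely this plan with the explicit colouring $\gamma(v)=V\setminus\{v\}$ and the run-by-run identification of accepting behaviour. Your treatment of $\NP$ membership and the self-loop argument just spell out details the paper leaves implicit.
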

\begin{proof}
	We can use the same reduction as in the proof of Proposition~\ref{prop:from-trans-to-states-optimal-NP} (which in turn comes from~\cite{Schewe10MinimisingNPComplete}).
	Indeed, we can replace the "transition-based" "B\"uchi" condition of the automaton $\A_G$ by a "state-based" "generalised B\"uchi" "condition". 
\end{proof}

\subsection{Determinisation of B\"uchi automata}

The determinisation of "B\"uchi" automata is a fundamental problem in the theory of $\oo$-automata, studied since the introduction of the model~\cite{Buchi1962decision}. The first asymptotically optimal determinisation construction is due to Safra~\cite{Safra1988onthecomplexity}, which transforms a "B\"uchi" automaton into a deterministic Rabin one.
In 1999, Redziejowski proposed a variant for building a "transition-based" automaton from a given $\omega$-regular expression~\cite{Redziejowski99}.
Later on, Piterman~\cite{Piterman2006fromNDBuchi} and Schewe~\cite{Schewe2009tighter} further improved Safra's construction, reducing the number of states of the final automaton (see also~\cite{Redziejowski12}).
\AP Schewe's construction transforms a "B\"uchi" automaton of size $n$ into a deterministic Rabin "automaton" of size at most $\intro*\sizeDet(n)$, which is naturally equipped with a "transition-based" "acceptance condition" (with $\sizeDet(n) = o((1.65n)^n)$).
In 2009, Colcombet and Zdanowski~\cite{Colcombetz2009tight}  showed that the Piterman-Schewe construction is tight (up to $0$ states!) as we precise now.

\begin{proposition}[\cite{Colcombetz2009tight}]
	 There exists a family of "B\"uchi" automata $\A_n$ with $n$ states, such that a minimal "transition-based" deterministic Rabin automaton equivalent to $\A_n$ has $\sizeDet(n)$ states.
\end{proposition}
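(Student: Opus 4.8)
The plan is to exhibit an explicit family of B\"uchi automata $\A_n$ and argue that Schewe's determinisation is forced to use all $\sizeDet(n)$ states, i.e.\ that no deterministic transition-based Rabin automaton equivalent to $\A_n$ can be smaller. The natural candidate is the family of ``full'' nondeterministic B\"uchi automata: $\A_n$ has state set $\{1,\dots,n\}$ over a suitable finite alphabet (the alphabet encoding which transitions, and in particular which accepting transitions, are available at each step), with \emph{every} syntactically possible transition present and a maximal set of accepting transitions. The idea is that such an automaton is ``generic'' enough that the Safra/Piterman/Schewe subset-with-history construction cannot collapse: every reachable Safra-tree–like state is genuinely needed to recognise the language. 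This is exactly the family used by Colcombet and Zdanowski, and the theorem is their tightness result.

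The key steps, in order, are as follows. First, I would recall the precise shape of the Piterman--Schewe construction: its states are ``history trees'' (ordered labelled trees whose nodes carry disjoint subsets of $Q$, together with an indexing), and the count of such trees over $n$ states is exactly $\sizeDet(n)$; the Rabin acceptance condition is placed on transitions, recording which nodes are ``stable'' and which are ``refreshed''. Second, I would define $\A_n$ over the alphabet of all such local transition patterns so that, reading an appropriate word, the run of the determinised automaton visits a prescribed history tree and, crucially, any two distinct history trees can be separated: there is a word $u$ leading to tree $t_1$ from one and tree $t_2$ from the other such that the continuations accepted differ. Third, from this separation property I would conclude a lower bound via a Myhill--Nerode–style / fooling-set argument on \emph{deterministic} Rabin automata: if a deterministic Rabin automaton $\D$ equivalent to $\A_n$ had fewer than $\sizeDet(n)$ states, two distinguishable history trees would be merged into one state of $\D$, and then feeding the distinguishing continuation would make $\D$ accept on one input iff it accepts on the other — contradiction. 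Since the construction does produce a correct deterministic Rabin automaton with exactly $\sizeDet(n)$ states, this is optimal, and the ``up to $0$ states'' claim follows.

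The main obstacle, and the technically delicate point, is establishing the separation/distinguishability of history trees in the language $L(\A_n)$: it is not enough that two history trees are different combinatorial objects — one must design the alphabet and the words so that the \emph{Rabin-accepting behaviour} after reaching $t_1$ versus $t_2$ genuinely differs, and this must be robust to the adversary using a different (possibly cleverer) deterministic Rabin condition rather than the canonical one. Handling arbitrary Rabin acceptance conditions, rather than just the one output by the construction, is what makes the lower bound ``tight up to $0$'': one has to argue at the level of right-congruence classes refined by accepting continuations, showing there are at least $\sizeDet(n)$ such classes (appropriately defined for $\omega$-languages with Rabin acceptance, e.g.\ via the finite index of a suitable congruence together with a counting of the ``acceptance types''). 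I expect the bookkeeping for this counting — matching it exactly to the closed-form $\sizeDet(n)$ — to be the part requiring the most care, and I would lean on the combinatorial identities for history trees already worked out in the Piterman and Schewe papers rather than rederiving them.
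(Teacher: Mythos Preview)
The paper does not give its own proof of this proposition: it is stated purely as a citation of Colcombet and Zdanowski~\cite{Colcombetz2009tight}, with no argument supplied. So there is nothing in the present paper to compare your proposal against; you are effectively trying to reconstruct the proof of the cited reference.

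That said, your sketch has a genuine gap worth flagging. The ``Myhill--Nerode / fooling-set'' step you propose does not transfer cleanly to deterministic $\omega$-automata with Rabin acceptance. For finite words, two states merged in a smaller automaton can be separated by a single suffix; for $\omega$-automata, acceptance depends on the \emph{set of transitions seen infinitely often}, so two distinct states of a hypothetical smaller Rabin automaton can recognise exactly the same set of $\omega$-continuations while still being non-mergeable (and conversely, distinct history trees may have the same continuation language yet still both be necessary because of how they interact with the Rabin pairs along \emph{cycles}). A right-congruence argument alone therefore cannot yield a lower bound matching $\sizeDet(n)$; indeed the number of residuals of $L(\A_n)$ is far smaller than $\sizeDet(n)$.

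Colcombet and Zdanowski do start from the full B\"uchi automaton you describe, but their lower bound is obtained via a two-player game in which Spoiler, playing on the candidate small deterministic Rabin automaton, forces Duplicator (playing on the space of history trees) into a contradiction; the crux is an invariant relating the current Rabin state to a set of history trees that must remain ``live'', and a counting argument on these sets. Your proposal identifies the right hard instance and the right upper-bound construction, but the lower-bound mechanism you outline would need to be replaced by this game-theoretic argument (or something of comparable strength) to go through.
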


We can obtain a "state-based" automaton by augmenting the number of states, but doing so we no longer have a matching lower bound. 
No such tight bounds are known for the determinisation of "B\"uchi" automata towards "state-based" "automata".

The complementation and determinisation problems for "B\"uchi" and "generalised B\"uchi" automata with "transition-based" acceptance were further studied by Varghese in his PhD Thesis~\cite{Varghese2014PhD}. In the works of Schewe and Varghese~\cite{Varghese12, Varghese14Determinising}, they point out the suitability of "transition-based acceptance" for the study of transformations of automata.

\section{Games on graphs and strategy complexity}\label{sec:games}

\AP A ""game"" is given by a directed graph $G=(V,E)$ with a partition of vertices into those controlled by a player Eve and those controlled by a player Adam, a initial vertex and a ""winning condition"" defined in the same way as the "acceptance condition" of "automata" (which can be "state-based" or "transition-based").
The players move a token in turns producing an infinite path, and Eve wins if this path belongs to the "winning condition".

An important concept with applications for the decidability of logics~\cite{BL69Strategies,Gurevich1982trees} and verification~\cite{BloemCJ2018Handbook} is that of strategy complexity: how complex is it to represent a winning strategy?
\AP The simplest kind of strategies are \emph{"positional"} ones. A strategy is ""positional@@strat"" if it can be represented by a function $\sigma\colon V\to E$: when in a vertex $v$ controlled by Eve, she plays the transition $\sigma(v)$.
\AP More generally, a strategy is said to use ""finite-memory"" if the choice at a given moment only depends on a finite amount of information from the past, or, said differently, it can be implemented by a finite automaton (we refer to~\cite[Section~1.5]{bookGames25} for formal definitions).

As already noticed by Zielonka~\cite{Zielonka1998infinite}, and as we will see next, 
strategy complexity is quite sensitive to the placement of the "winning condition".

\subsection{Bipositionality over infinite games}
\AP We say that a language $\win\subseteq C^\omega$ is ""positional@@lang"" if for every game with "winning condition" $\win$, if Eve has a winning strategy, she has a "positional@@strat" one.
\AP A language $\win$ is ""bipositional"" if both $\win$ and its complement are "positional@@lang", or, said differently, if both Eve and Adam can play optimally using "positional strategies".
Depending on whether we consider games with "transition-based" or "state-based" "winning condition", we will say accordingly \emph{positional over transition/state-based games}.

A celebrated result in the area is the proof of "bipositionality" of "parity" languages~\cite{EmersonJutla91Determinacy,Mostowski1984RegularEF}.
In 2006, Colcombet and Niwiński proved that these are the only "prefix-independent" "bipositional" languages over infinite game graphs~\cite{CN06}, establishing an elegant characterisation of "bipositionality". As indicated in the title of their paper, this characterisation only holds for "transition-based" games.

\begin{proposition}[\cite{CN06}]
	A "prefix-independent" language $\win\subseteq C^\oo$ is "bipositional" over "transition-based" "games" if and only if there is $d\in \NN$ and a mapping $\phi \colon C \to \{1,\dots,d\}$ such that $w\in \win$ if and only if $\phi(w)\in\parity(d)$.
\end{proposition}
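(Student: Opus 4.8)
The plan is to prove both directions of the Colcombet–Niwiński characterisation separately, with the bulk of the work in the ``only if'' direction. The ``if'' direction is the classical bipositionality of parity, already available in the excerpt: given a mapping $\phi\colon C\to\{1,\dots,d\}$ pulling $\win$ back to $\parity(d)$, any game with winning condition $\win$ becomes a parity game after relabelling each colour $c$ by $\phi(c)$, and the relabelled game has exactly the same plays, the same positional strategies, and the same winner as the original. Since parity games are bipositional, so is $\win$. The only point to be careful about is that the relabelling is applied to the colours appearing on the game (on transitions, since we are in the transition-based setting), which is harmless.

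For the ``only if'' direction, assume $\win\subseteq C^\oo$ is prefix-independent and bipositional over transition-based games. The first key step is to extract a preorder on colours (or on $C^\omega$) recording ``relative quality'': for colours $a,b\in C$, declare $a\preceq b$ if, informally, in any context it is at least as good for Eve to produce $b$ as to produce $a$. Concretely I would define this via a family of one-player or two-player test games that probe whether swapping $a$'s for $b$'s can ever hurt Eve; bipositionality is what guarantees these tests behave coherently (a non-positional optimal play would let one distinguish colours inconsistently). The second step is to show this preorder is in fact a \emph{total} preorder with finitely many classes: totality uses a two-colour ``duel'' game where Eve and Adam each repeatedly offer one of the two colours, and positionality of both players forces a consistent verdict; finiteness follows because an infinite strictly increasing or decreasing chain would itself encode a non-positional winning condition, contradicting bipositionality when plugged into a suitable gadget. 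The third step is to linearise: index the finitely many classes $1,\dots,d$ by ``quality'', and show the parity of the $\liminf$ is what determines membership — i.e.\ that the worst (least) class seen infinitely often decides the game, and alternating classes must alternate between ``good for Eve'' (even) and ``good for Adam'' (odd). Defining $\phi$ to send each colour to the index of its class then yields $w\in\win \iff \phi(w)\in\parity(d)$, using prefix-independence to reduce membership of $w$ to the tail behaviour of $\phi(w)$.

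The main obstacle is the middle of this argument: rigorously building the total preorder on colours and proving it has \emph{finitely many} classes purely from bipositionality over arbitrary (possibly infinite) game graphs. The subtlety is that bipositionality must be exploited on cleverly designed game graphs — one cannot just look at $\win$ abstractly; one has to embed potential ``counterexample structures'' (an infinite antichain of incomparable colours, or an infinite ascending chain) as game arenas on which neither player can win positionally, and derive a contradiction. Getting these gadget constructions right, and checking that the resulting $\liminf$-parity description is both well-defined and correct (especially the alternation of even/odd between consecutive classes, which is forced by considering games that oscillate between two adjacent classes), is where the real content lies; the rest is bookkeeping and an appeal to the known positionality of $\parity(d)$ for the final consistency check.
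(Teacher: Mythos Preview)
The paper does not prove this proposition: it is stated with a citation to~\cite{CN06} and no proof is given. There is therefore nothing in the paper to compare your proposal against.

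As a standalone sketch of the Colcombet--Niwi\'nski argument, your plan is broadly on the right track (total preorder on colours extracted from test games, finiteness of classes, alternation giving parity), and the ``if'' direction is indeed immediate from the bipositionality of $\parity(d)$. One point where your outline is vaguer than it should be is the step ``the worst class seen infinitely often decides the game'': this is not just bookkeeping but the heart of the matter --- you must show that membership in $\win$ depends \emph{only} on the $\preceq$-least colour occurring infinitely often, which is what actually identifies $\win$ with a parity condition rather than, say, a Rabin or Muller condition on the quotient classes. This requires its own gadget argument (typically a game where Eve or Adam can freely interleave colours from two chosen classes), not merely the totality and finiteness of $\preceq$. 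Without that step made explicit, the plan has a genuine gap between ``there is a finite total preorder on colours'' and ``$\win$ is the pullback of $\parity(d)$''.
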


\begin{proposition}[{\cite[Section~6]{Zielonka1998infinite}}]
	There is a "prefix-independent" language that is  "bipositional" over  totally-coloured "state-based" games, but is not equivalent to $\parity(d)$ for any~$d$.
\end{proposition}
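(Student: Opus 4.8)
The plan is to exhibit an explicit condition over a two‑letter colour alphabet. Take $C=\{a,b\}$ and let
\[
\win \;=\; \{\, w\in C^{\oo} \mid \text{both } a \text{ and } b \text{ occur infinitely often in } w \,\},
\]
that is, the "Muller" "condition" $\Muller(\{C\})$ in the notation above (equivalently, the condition asking that both $\{a\}$ and $\{b\}$ be visited infinitely often). It is "prefix-independent" by construction. There are three things to check: that $\win$ is not a relabelling of any $\parity(d)$; that in every "game" with "winning condition" $\win$ Adam can play optimally positionally; and --- the real content --- that in every \emph{totally-coloured} "state-based" "game" with condition $\win$ Eve can play optimally positionally.

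For the first point: if some $\phi\colon C\to\{1,\dots,d\}$ satisfied $w\in\win \Longleftrightarrow \phi(w)\in\parity(d)$, then from $a^{\oo}\notin\win$ and $b^{\oo}\notin\win$ we would get that $\phi(a)$ and $\phi(b)$ are both odd, whence $\liminf \phi((ab)^{\oo})=\min(\phi(a),\phi(b))$ is odd and $(ab)^{\oo}\notin\win$ --- contradicting that $(ab)^{\oo}$ has both colours infinitely often. For Adam: his objective is $\overline{\win}$, which accepts exactly the words whose set of infinitely-recurring colours is $\{a\}$ or $\{b\}$; this is a "Rabin" "condition" (pairs $(\{b\},\{a\})$ and $(\{a\},\{b\})$), and "Rabin" conditions are positional for the player pursuing them (a classical fact). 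Concretely, Adam picks a colour $c$, plays a positional reachability strategy to the largest region in which he can remain forever among $c$-coloured vertices, and then stays there.

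The heart of the matter is that $\win$ is positional over "state-based" "games" --- and it is \emph{not} positional over "transition-based" "games": in the one-vertex "game" controlled by Eve with a self-loop coloured $a$ and a self-loop coloured $b$, Eve wins by alternating the loops but loses with every "positional strategy", so the "state-based" restriction is essential here (exactly as the previous proposition leads one to expect). Given a totally-coloured "state-based" "game" $G$ with condition $\win$, I would let $W$ be Eve's winning region; as $W$ is a trap for Adam, Eve can keep the play inside $W$, so it suffices to build a winning "positional strategy" in the subgame induced by $W$. From every vertex of $W$ Eve can force a visit to an $a$-coloured vertex while remaining in $W$ --- otherwise Adam could confine the play to $b$-coloured vertices and make $\win$ fail --- and symmetrically a visit to a $b$-coloured vertex; fix positional reachability strategies $\sigma_a,\sigma_b$ in the subgame witnessing these. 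Then define a "positional strategy" $\sigma$ on $W$ that follows $\sigma_b$ at every $a$-coloured vertex and follows $\sigma_a$ at every $b$-coloured vertex: here the colour of the current vertex takes over the role that a one-bit memory plays over "transition-based" "games". Every $\sigma$-play stays in $W$; were it to meet $b$-coloured vertices only finitely often, it would from some point on remain among $a$-coloured vertices and hence follow $\sigma_b$, which forces reaching a $b$-coloured vertex in finitely many steps --- contradiction; by symmetry it meets $a$-coloured vertices infinitely often as well, so it is won by Eve.

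The only genuinely delicate step is this last one: turning the ``colour-as-memory'' idea into a rigorous argument valid for arbitrary (in particular infinite) totally-coloured "state-based" "games" --- treating Eve's winning region as a subgame and using the finiteness of the plays induced by the attractor strategies $\sigma_a,\sigma_b$, with care. The remaining two points are routine.
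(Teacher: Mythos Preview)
Your proposal is correct and follows essentially the same approach as the paper: the same example $\win=\{w\in\{a,b\}^\omega\mid a\text{ and }b\text{ both occur infinitely often}\}$, the same ``colour-as-memory'' argument for Eve (at an $a$-vertex play towards $b$ and vice versa), and the same observation that Adam's side is a Rabin condition. You supply more detail than the paper's sketch --- the explicit non-parity argument, the one-vertex transition-based counterexample, and the contradiction showing both colours recur under $\sigma$ --- but there is no methodological difference.
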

\begin{proof}[Proof sketch]
	An example of such a language is 
	\[ \win = \{w\in\{a,b\}^\omega \mid \text{ both $a$ and $b$ appear infinitely often in }w\}.\]
	Intuitively, if Eve is in a vertex coloured $a$, she can follow a strategy leading to a vertex coloured $b$ in a positional way (and vice-versa).

	From Adam's point of view, if he can win, there are some vertices from which he can force to never produce `$a$' or force to never produce `$b$' (and this can be done positionally). Removing those vertices, we define a "positional strategy" recursively.
	(Note that this can also be done for "transition-based" games, in fact, from Adam's point of view, $\win$ is a Rabin condition, which are "positional".)
\end{proof}

The characterisation of "bipositionality" was generalised to all (not necessarily "prefix-independent") languages in~\cite[Thm.~7.1]{CO24Positional}.
A necessary condition for "bipositionality" is that the language should be recognised by a "transition-based" deterministic "parity" automaton with one state per residual of the language.
This property is very sensitive to the placement of the "acceptance condition", if suffices to consider the language $\Buchi$ that cannot be recognised by a "state-based" automaton with a single state.
The next example shows another version of this.

\begin{example}\label{ex:automaton-residuals}
	Consider the language
	\[ L = \{ w\in \{a,b\}^\omega \mid \text{ if letter `$a$' occurs in $w$ then it appears infinitely often}\}. \]
	This language has two residuals: $\ee^{-1}L$ and $a^{-1}L$. It can be recognised by a "transition-based" "parity" automaton (even a "B\"uchi" automaton) with two states, as shown in Figure~\ref{fig:buchi-bipositional}.
	One can check that it also satisfies the other conditions from~\cite[Thm.~7.1]{CO24Positional}, so it is "bipositional". 
	However, it is not possible to recognise $L$ with a "state-based" "parity" automaton with only $2$ states.
\end{example}

\begin{figure}[h]	
	\centering
		\includegraphics[scale=1.1]{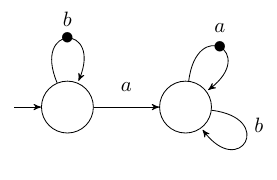}
	\caption{A "B\"uchi" automaton recognising the "bipositional" language of words that either contain no $a$, or infinitely many $a$'s.
	This automaton has one state per residual of the language.
	A "state-based" parity automaton recognising this language must have at least $3$ states.}
	\label{fig:buchi-bipositional}
\end{figure}

\subsection{Positionality via monotone graphs}
Recently, Ohlmann characterised "positionality" by means of \emph{"monotone" universal graphs}~\cite{Ohlmann23Univ}.
Not only this characterisation concerns "positionality" over "transition-based" games, but the main notion of "monotone graph" radically uses the colouring on transitions.
\AP An ordered edge-coloured graph is ""monotone"" if whenever $v\re{a}u$, $v\leq v'$ and $u'\leq u$, then the edge $v'\re{a}u'$ also appears in the graph. 
Such kind of properties can only be naturally phrased in edge-coloured graphs.

Universal "monotone" graphs have been used to study the algorithmic complexity of solving different types of games on graphs, such as parity and mean-payoff~\cite{CFGO22Universal}, and the above characterisation has been generalised to the memory of languages~\cite{CO25Universal}.

\subsection{The memory of $\omega$-regular languages}
\AP The ""memory"" of a language $\win$ is the minimal $m\in \NN$ such that in any game with objective $\win$, if Eve has a winning strategy, she has one implemented by an automaton with at most $m$ states.
A result with major implications in logic is the fact that $\oo$-regular languages have finite-memory~\cite{BL69Strategies,Gurevich1982trees}.

\AP Recently, Casares and Ohlmann gave an effective way of computing the "memory" of  $\omega$-regular languages~\cite{CO25memory}, based on a characterisation using the notion of \emph{"$\ee$-completable" "parity" automata}.
The definition of this notion is rooted in the use of "transition-based" acceptance: 
\AP A "parity" automaton is ""$\ee$-completable"" if for every pair of states $q,q'$ and even colour $x$ of the parity condition, we can either add a transition $q\re{\ee:x} q'$ or a transition $q'\re{\ee:x+1}q$ without modifying the language recognised by the automaton.

In 2023, Bouyer, Randour and Vandenhove showed that $\oo$-regular languages are exactly those that are arena-independent finite-memory determined (that is, both Eve and Adam admit finite automata implementing strategies in every game with "winning condition" $\win$)~\cite[Thm.~7]{BRV23OmegaRegMemory}.
The use of "transition-based" acceptance is key for the construction of a "parity" automaton recognising a language with the above property~\cite[Section~5]{BRV23OmegaRegMemory}.

In 2021, Casares showed that the smallest automata that can be used for implementing winning strategies in every game using a given "Muller language" $\Muller(\F)$ are exactly deterministic  Rabin automata recognising  $\Muller(\F)$~\cite[Thm.~27]{Casares2021Chromatic}.
In a related work, Casares, Colcombet and Lehtinen showed that the "memory" of $\Muller(\F)$ coincides with the number of states of a minimal "history-deterministic" Rabin automaton recognising  this language~\cite[Thm.~5]{CCL22SizeGFG}.
Both results only apply to "transition-based" Rabin automata.

\section{What about finite words?}\label{sec:finite-words}
\AP In light of the results above, one naturally wonders whether a shift to "transition-based acceptance" would also be beneficial for automata on finite words (""DFAs"" in the following).
Classical finite automata have a robust mathematical theory -- notably, every regular language admits a canonical minimal "DFA" -- and "state-based" acceptance is the undisputed preferred option for them.
However, transition-based variants have been considered recently in works about synthesis of $\LTL$ over finite traces~\cite{SXLGP20,XLZSPV21,XLHXLPSV24,DZPGV25} and about translations of regular expressions over valuations of atomic propositions~\cite{MRD24}.


 \paragraph{Definition of acceptance.} One option to define the acceptance of transition-based finite automata is simply to specify a set of final transitions: a run is accepting if its last transition belongs to this set.\footnote{In this case, we should also specify whether the empty word is accepted.}
More generally, following the definition of "$\omega$-automata" used in this document, we define the \AP ""acceptance condition of a transition-based DFA"" as a language $\acc\subseteq \Delta^*$: a run is accepting if it belongs to $\acc$.
If $\acc$ is a regular language, such an automaton accepts a regular language (we can convert it into a classical "DFA" by a product construction).
Using a colouring function $\gamma\colon \Delta \to \{-, \circledcirc\}$ as in the introduction, we can recast acceptance by final transitions as automata using the following condition:
\[ \intro*\accLast = \{ w\in \{-, \circledcirc\}^* \mid \text{ the last letter of } w \text{ is } \circledcirc \}. \]

\paragraph{The role of prefix-independence and the empty word.}
When using the above general model of "transition-based DFAs" we encounter one inconvenience: the language recognised starting from a given state $q$ may be ill-defined, since the set of runs accepted from $q$ depends on the particular path that led to $q$ from the initial state.
Independence from the past of the run is a key property, 
notably for defining a minimal DFA, where each state corresponds to a left-quotient of the language.

\AP This problem would not arise if the acceptance condition $\acc$ was ""prefix-independent"", that is, if for all sequences of transitions $u_0$ and $u$: 
\begin{equation*}
	\label{eq:consistency}
	u_0u \in \acc  \iff u \in \acc.
\end{equation*}

However, the only "prefix-independent" languages of finite words are the empty and the full language, which cannot be used to recognise non-trivial languages. Indeed, if $\acc$ is "prefix-independent", then $u\in \acc \iff \ee\in \acc$ for all $u\in \Sigma^*$. 

Nevertheless, the language $\accLast$ is almost "prefix-independent", as it satisfies:
\begin{equation*}
	\label{eq:eps-prefInd}
	\text{for all } u_0 \tand u\neq \ee, \; \; u_0u\in \accLast \iff u\in \accLast \,.
\end{equation*}

This property makes $\accLast$ well-suited for state-based acceptance, as the acceptance of $\ee$ can be encoded in a state, obtaining a definition of acceptance that is agnostic to the way we reach a given state.

\paragraph{Minimal transition-based DFA.} It is well-known that the minimal state-based "DFA" of a regular language $L\subseteq \Sigma^*$ is given by the equivalence classes of the Myhill-Nerode congruence.
In order to fit the "transition-based" setting, we can coarsen this relation, disregarding separations by the empty word:
\[u\intro*\eqeps v  \;\; \iffdef \;
\; \text{for all } w\neq \epsilon, \; uw\in L \iff vw\in L.\]

The next lemma is an easy check.

\begin{lemma}\label{lem:eqres-congruence}
	The relation $\eqeps$ is an equivalence relation over $\Sigma^*$. Moreover, if $u\eqeps v$, then $ua\eqeps va$ for all $a\in \Sigma$.
\end{lemma}

In the following, by a "transition-based DFA" we mean one with acceptance by final transitions, that is, using the acceptance condition $\accLast$.

\begin{proposition}\label{prop:minimal-TDFA}
	Every regular language of finite words has a unique minimal "transition-based DFA", which has one state per equivalence class of $\eqeps$. 
\end{proposition}
\begin{proof}
	Let $L\subseteq \Sigma^*$ be a regular language.
	Consider the "DFA" $\A_{\min}$ having as states the $\eqeps$-classes of $L$, with $[\epsilon]$ the initial state, and transitions $[u] \re{a} [ua]$, where accepting transitions are those with $ua\in L$. 
	Moreover, we need to specify whether $\ee\in L$; in the positive case, we let the initial transition of the automaton be accepting.
	This automaton is well-defined and recognises $L$ thanks to Lemma~\ref{lem:eqres-congruence}. 

	Let $\A$ be a "transition-based DFA" recognising $L$. For a state $q$ in $\A$, let 
	\[ L_\ee^{\A}(q) = \{ w\in \Sigma^+ \mid \text{ the run over } w \text{ from $q$ is accepting}\}.\]
	It holds that, if $u$ labels a path from the initial state to $q$, then $L_\ee^{\A}(q)= L_\ee^{\A_{\min}}([u])$. 
	Moreover, $L_\ee^{\A_{\min}}(u) \neq L_\ee^{\A_{\min}}([v])$ if $u \not\eqeps v$.
	Therefore, $\A_{\min}$ has at most as many states as $\A$, and in case of equality, they are isomorphic.
\end{proof}

Proposition~\ref{prop:minimal-TDFA} implies that "transition-based DFAs" are not larger than state-based ones.
Moreover, they can be strictly smaller, as shown by the following example and Corollay~\ref{cor:size} (see also~\cite[Figs.~2-4]{MRD24}).

\begin{example}
	Let $\Sigma = \{a,b\}$ and consider the language of words that either have even lenght and end by `$b$', or have odd length and end by `$a$'.
	A minimal state-based DFA for this language, with $4$ states, is given on the left of Figure~\ref{fig:finite-automata}.
	Note that the states $q_a$ and $q_b$ are not equivalent, as only one of them is accepting.
	However, $L_\ee(q_a) = L_\ee(q_b)$ (idem for $p_a$ and $p_b$).
	Therefore, we can merge these states, obtaining a transition-based DFA with only $2$ states.
\end{example}

\begin{figure}[ht]
	\centering
	\hspace{5mm}
	\begin{minipage}[c]{0.45\textwidth} 
		\includegraphics[scale=1.1]{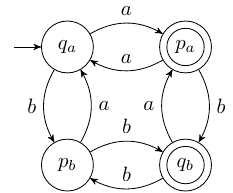}
	\end{minipage} 
	\hspace{-5mm}
	\begin{minipage}[c]{0.45\textwidth} 
		\includegraphics[scale=1.1]{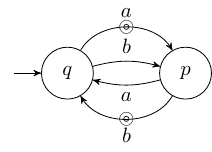}
	\end{minipage} 
	\caption{Two automata recognising the language $L = \{w\in \{a,b\}^+ \mid w \text{ is of even lenght if and only if it ends by  `$b$'}\}$. 
	The automaton on the left is the minimal state-based "DFA" of $L$, and the automaton on the right is its minimal "transition-based DFA".}
	\label{fig:finite-automata}
\end{figure}

Generalising the previous example  and using Proposition~\ref{prop:minimal-TDFA}, we obtain:

\begin{corollary}\label{cor:size}
	Every "transition-based DFA" admits an equivalent state-based "DFA" with at most twice as many states.
	This bound is tight: there is a family of languages for which a minimal state-based "DFA" has twice as many states as a minimal "transition-based DFA".
\end{corollary}
\begin{proof}
	For the first claim, it suffices to note that the index of the classical Myhill-Nerode congruence is at most twice the index of $\eqeps$.
	
	For the second claim, let $\Sigma = \{a,b\}$ and consider the language:
	\[ L_n = \{ w\in \Sigma^+ \mid w \text{ ends by `$b$' if and only if } \; |w| \equiv 0 \bmod n\}. \]
	The congruence $\eqeps$ has $n$ classes, corresponding to the remainder of $|w|$ modulo $n$. 
	The Myhill-Nerode congruence has $2n$ classes, as words ending with a different letter are not equivalent. 
\end{proof}

We note that such a gap does not appear for non-deterministic automata.
Indeed, every transition-based NFA can be converted into an equivalent state-based NFA with only one more state. It suffices to add a sink state, which will be the only accepting state, and duplicate all accepting transitions redirecting one copy towards this sink. (This operation increases the number of transitions though.)

\paragraph{Where does this leave us?}
As we have seen, "transition-based DFAs" can be smaller than classical state-based ones.
Moreover, most standard constructions adapt to the transition-based setting without problem (determinisation, product construction, removal of $\ee$-transitions, etc). Some of them, such as the conversion of regular expressions, may even benefit from the use of transition-based acceptance~\cite{MRD24}.
Transition-based DFAs can be of particular interest when used as an intermediate step for the construction of $\omega$-automata~\cite{MRD24}, or when the acceptance of the empty word is irrelevant, as in $\LTLf$ semantics.

However, there are signs pointing towards the canonicity of state-based acceptance for DFAs.
Notably, the syntactic monoid of a language equals the transition monoid of its minimal state-based DFA~\cite[Prop.~4.28]{Pin25MPRI}.
It is unclear to us what would be the correct way to recover the syntactic monoid from a "transition-based DFA".
Other questions regarding the "transition-based" model remain open.
For instance, are there acceptance conditions other than $\accLast$ that lead to unique minimal DFAs for all regular languages?

\section{Outlook: Why all these differences?}\label{sec:conclusion}
We have seen various situations where "transition-based" acceptance
is more advantageous, both for practical and theoretical reasons. The following question
arises naturally:
What are the fundamental differences between "state-based" and "transition-based" models that lead to such contrasting properties?

\subparagraph{Composition of transitions.}
A basic operation at the heart of many reasonings in automata theory is \emph{composition of transitions}.
If an automaton contains transitions $p \re{a} q$ and $q\re{b} r$, one can go from $p$ to $r$ by reading $ab$, and any ``effect'' of this path should be the result of concatenating the effects of these two transitions.
That is, a suitable automata model should allow to add the transition $p\re{ab}r$. For automata over infinite words, the acceptance of the automaton obtained by adding this transition can only be defined in a sensible way by using a "transition-based" condition.

This composition operation is key for the celebrated connection between automata and algebra.
The suitability of transition-based models for algebraic approaches is explicitly mentionned in Michel's work introducing transition-based $\omega$-automata~\cite[Section~II]{Michel1984}:

\begin{quote}
	{\small \textit{Using unstable graphs, instead of a set of nodes that must be traversed infinitely often, is better suited to the algebraic operations we will define~[...]}}~\footnote{In French in the original: {\small \textit{L’utilisation de graphes instables au lieu d’un ensemble de nœuds dans lequel on doit passer infiniment souvent se prête mieux aux opérations algébriques que nous définirons~[...]}.}}
\end{quote}

Similarly, one of LeSaëc's motivations for the use of transition-based automata was to obtain an algebraic proof of McNaughton's theorem for infinite words~\cite{SPW91Semigroups}.
The "Muller" automaton obtained from a given semigroup is naturally "transition-based", see~\cite[page~18]{SPW91Semigroups} and~\cite[Section~6]{Colcombet11Green}.

As mentioned in Section~\ref{sec:games}, composition of transitions is also essential in the fruitful approach for solving and analysing infinite duration games based on universal graphs, which relies on the notions of "monotonicity", "$\ee$-completion" and the technique of saturation (for the latter, see~\cite[Section~4]{CF18UniversalGraphs},~\cite[Section~4.1]{CFGO22Universal} or~\cite[Section~3.3]{Ohlmann23Univ}).

We note, however, that in the case of finite words, this does not provide strong evidence in favour of transition-based acceptance.
Indeed, state-based DFAs also allow for composition of transitions, as the acceptance of a run is only determined by its final destination. 

\subparagraph{Paths in graphs.} 
As explained in the introduction, an "acceptance condition" is a representation of a subset of paths in an automaton.
A path in a graph is commonly defined as a sequence of edges. 
In fact, a sequence of vertices does not completely determine a path, as different paths may share the same sequence of vertices.
This is the main reason why "transition-based" automata are more succinct than "state-based" ones.

\subsection*{Final thoughts}
The collection of results presented in this survey indicates that, despite the fact that the size of "state-based" and "transition-based" automata only differ by a linear factor, "transition-based" models are easier to manipulate and have a nicer theory.
We therefore advocate adopting "transition-based acceptance" as the default model for $\oo$-automata.

We expect that the use of "transition-based" acceptance will ease the finding of automata-based characterisation of classes of languages. This has already been the case, for example, in the characterisation of "positional" $\oo$-regular languages based on "parity" automata with a particular structure~\cite[Thm.~3.1]{CO24Positional}.

In the same spirit, it appears that the use of "transition-based" models will be required for obtaining canonical models of automata over infinite words or trees. Steps in this direction have already been made~\cite{EhlersSchewe22NaturalColors,Ehlers25Rerailing,LW25congruences}, building on the description of canonical "history-deterministic" "coB\"uchi" automata by Abu Radi and Kupferman~\cite{AK22MinimizingGFG}.

\subparagraph{ Acknowledgements.}
 I warmly thank Thomas Colcombet for many discussions on the benefits of transition-based acceptance, Alexandre Duret-Lutz for valuable comments on automata over finite words and for sharing many historical references, Géraud Sénizergues for pointing me to the works of Bertrand Le Saëc and 
Pierre Ohlmann for helpful feedback on a draft of this paper.


\printbibliography


\end{document}